\numberwithin{equation}{section}
\newcommand{\bigpare}[1]{\bigl(#1\bigr)}
\newcommand{\biggpare}[1]{\biggl(#1\biggr)}
\newcommand{\Bigpare}[1]{\Bigl(#1\Bigr)}
\newcommand{\bigbrac}[1]{\bigl[#1\bigr]}
\newcommand{\biggbrac}[1]{\biggl[#1\biggr]}
\newcommand{\bigset}[2]{\bigl\{#1\bigm|#2\bigr\}}
\newcommand{\bignorm}[1]{\bigl\| #1 \bigr\|}
\newcommand{\Bignorm}[1]{\Bigl\| #1 \Bigr\|}
\newcommand{\bigabs}[1]{\bigl| #1 \bigr|}
\newcommand{\biggabs}[1]{\biggl| #1 \biggr|}
\newcommand{\jap}[1]{\langle #1 \rangle}
\def\a{\alpha}
\def\b{\beta}
\def\d{\delta}
\def\f{\varphi}
\def\m{\mu}
\def\x{\xi}
\def\y{\eta}
\newcommand{\F}{\Phi}
\newcommand{\G}{\Psi}
\renewcommand{\L}{\Lambda}
\renewcommand{\O}{\Omega}
\def\re{\mathbb{R}}
\def\ze{\mathbb{Z}}
\def\pa{\partial}
\def\torus{\mathbb{T}}
\newcommand{\supp}{\text{{\rm supp}\;}}
\DeclareMathOperator*{\slim}{s-lim}
\newcommand{\Ran}{\text{\rm Ran\;}}
\newtheorem{thm}{Theorem}[section]
\newtheorem{lem}[thm]{Lemma}
\newtheorem{prop}[thm]{Proposition}
\theoremstyle{definition}
\newtheorem{defn}{Definition}[section]
\theoremstyle{remark}
\newtheorem{rem}{Remark}
\begin{document}

\title{Modified wave operators for discrete Schr\"odinger operators with long-range perturbations}
\author{Shu Nakamura\footnote{%
Graduate School of Mathematical Sciences, University of Tokyo, 
3-8-1, Komaba, Meguro, Tokyo, 153-8914 Japan.
E-mail: {\tt shu@ms.u-tokyo.ac.jp}.
The research was partially supported by JSPS Grant Kiban (A) 21244008.}}

\maketitle

\begin{abstract}
We consider the scattering theory for discrete Schr\"odinger operators on $\ze^d$ with long-range potentials. 
We prove the existence of modified wave operators constructed in terms of solutions of a Hamilton-Jacobi 
equation on the torus $\torus^d$. 
\end{abstract}

\section{Introduction}

We consider the discrete Schr\"odinger operator 
\[
Hu[n]= -\frac12\triangle u[n] +V[n] u[n], \quad n\in\ze^d,
\]
for $u\in \mathcal{H}=\ell^2(\ze^d)$, where 
\[
\triangle u[n] =\sum_{|m-n|=1} u[m],
\]
and $V$ is a real-valued function on $\ze^d$. 
We denote discrete variables using the square braces $[\cdot]$, and continuous variables using the 
round braces $(\cdot)$. If $V$ is bounded,  $H$ is a bounded self-adjoint operator on $\ell^2(\ze^d)$. 

The discrete Schr\"odinger operator $H$ has many common properties as the continuous Schr\"odinger 
operator on $\re^d$. For example, if $V$ is short-range type, i.e., 
\[
|V[n]|\leq C(1+|n|)^{-\m}, \quad n\in\ze^d, 
\]
with some $\m>1$ and $C>0$, then the scattering theory is constructed in the standard way. Namely, 
by setting $H_0=-\frac12\triangle$, we can show the wave operators 
\[
W_\pm =\slim_{t\to\pm\infty} e^{itH} e^{-itH_0}
\]
exist; it is an isometry into $\mathcal{H}_{ac}(H)$, the absolutely continuous subspace of $H$; 
the intertwining property: $HW_\pm=W_\pm H_0$ holds; moreover 
they are asymptotically complete: $\Ran W_\pm =\mathcal{H}_{ac}(H)$ 
(see, e.g, Boutet de Monvel, Sahbani \cite{BS}, Isozaki, Korotyaev \cite{IK} and references therein). 

We will consider the long-range case, i.e., when $0<\m\leq 1$. If $V$ is long-range type, the wave 
operators do not exist in general, and we need to introduce {\em modified}\/ wave operators. 
In order to state our main result, we introduce several notations. 

Following the standard notation, we denote $W\in S^m(\re^d)$, $m\in\re$, if $W\in C^\infty(\re^d)$ 
and for any multi-index $\a\in\ze_+^d$ there is $C_\a>0$ such that 
\[
\bigabs{\pa_x^\a W(x)}\leq C_\a \jap{x}^{m-|\a|},\quad x\in\re^d, 
\]
where $\jap{x}=\sqrt{1+|x|^2}$. 
We note that when we consider scattering theory for (continuous) Schr\"odinger operators, 
$V\in S^{-\m}(\re^d)$
with $\m>0$ is a standard assumption. A natural analogue for the discrete case is the following. 
We denote 
\[
\tilde\partial_j u[n] = u[n] -u[n-e_j], \quad n\in\ze^d, j=1,\dots,d, 
\]
where $\{e_j\}$ is the standard orthonormal basis of $\re^d$, and $u[\cdot]$ is a function on $\ze^d$. 
We denote $\tilde\pa^\a=\prod_{j=1}^d \tilde\pa_j^{\a_j}$ for $\a\in\ze_+^d$ as usual. 

\begin{defn} Let $V$ be a function on $\ze^d$, and let $m\in\re$. 
We denote $V\in S^{m}(\ze^d)$ if for any $\a\in\ze_+^d$ there is $C_\a>0$ such that 
\[
\bigabs{\tilde \pa^\a V[n]}\leq C_\a \jap{n}^{m-|\a|}, \quad n\in\ze^d.
\]
\end{defn}

We suppose $V\in S^{-\m}(\ze^d)$ with $\m>0$. Then, the essential spectrum on $H$ is $[-d,d]$, 
and we are interested in the structure of the spectrum of $H$ in $[-d,d]$. 

In the next section, we show that we can extend $V\in S^m(\ze^d)$ to an element in $S^m(\re^d)$. 
We denote $\torus =\re/(2\pi\ze)$. 
Then we may consider the classical mechanics generated by $p(x,\x)$, the symbol of $H$ on 
$T^*\torus^d\cong \re^d\times\torus^d$, that is 
\[
p(x,\x)= \sum_{j=1}^d \cos(\x_j) +V(x)\quad (x,\x)\in \re^d\times\torus^d.
\]
We denote the set of threshold energies by 
$\mathcal{T}=\{-d,-d+2,\dots, d-2,d\}$. 
For a given interval $I\Subset [-d,d]\setminus\mathcal{T}$, 
we can find solutions to the Hamilton-Jacobi equation on $\torus^d$:
\[
\frac{\pa}{\pa t}\F_\pm(t,\x) = p(\pa_\x\F_\pm(t,\x),\x), \quad \x\in\torus^d,  \ \pm t \geq 0, 
\ p_0(\x)\in I, 
\]
such that 
\[
|\F_\pm(t,\x)-tp_0(\x)| = O(|t|^{1-\m})
\]
as $t\to\pm\infty$, where $p_0(\x)= \sum_{j=1}^d \cos(\x_j)$.  

We denote the discrete Fourier transform by $F$, i.e., 
\[
Fu(\x) := (2\pi)^{-d/2} \sum_{n\in\ze^d} e^{-in\cdot \x} u[n] , \quad \x\in\torus^d=(\re/2\pi\ze)^d. 
\]
We note $F$ is a unitary map from $\ell^2(\ze^d)$ to $L^2(\torus^d)$. 
For a function $f(\x)$ on $\torus^d$, we denote 
\[
f(D_x)u = F^*(f(\x)(Fu)(\x)), \quad  u\in \ell^2(\ze^d). 
\]
Using these, we can state our main result: 

\begin{thm}
Suppose $V\in S^{-\m}(\ze^d)$ with $\m>0$. 
Let $I\Subset [-d,d]\setminus\mathcal{T}$, and and let $\F_\pm(t,\x)$ be as above (or as constructed in Section~4). 
Then the modified wave operators 
\[
W_\pm^\F(I)= \slim_{t\to\pm\infty} e^{itH} e^{-i\F(t,D_x)} E_I(H_0)
\]
exist and they are isometry from $\ell^2(\ze^d)$ to $\mathcal{H}_{ac}(H)$. 
Moreover, the intertwining property: 
\[
H W_\pm^\F(I) = W_\pm^\F(I) H_0
\]
holds. 
\end{thm}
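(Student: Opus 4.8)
The plan is to use Cook's method, reducing the statement to an integrability estimate for the time derivative. Let $\mathcal{D}$ be the set of $g\in\ell^2(\ze^d)$ whose Fourier transform $Fg$ lies in $C_c^\infty(\torus^d)$ and is supported in a compact subset of $\{\x\in\torus^d : p_0(\x)\in I\}$. Then $\mathcal{D}$ is dense in $\Ran E_I(H_0)$, and on $\supp(Fg)$ one has $|\pa_\x p_0(\x)|\ge c_0>0$, since the critical points of $p_0$ all lie over energies in $\mathcal{T}$, which is disjoint from $I$. As $e^{-i\F(t,D_x)}$ is unitary on $\Ran E_I(H_0)$ (because $\F$ is real-valued), by density it suffices to show that $\slim_{t\to\pm\infty}e^{itH}e^{-i\F(t,D_x)}g$ exists for each $g\in\mathcal{D}$. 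Since $\F(t,D_x)$ commutes with $\pa_t\F(t,D_x)$ and with $H_0=p_0(D_x)$, the Hamilton-Jacobi equation $\pa_t\F(t,\x)=p_0(\x)+V(\pa_\x\F(t,\x))$ on $\{p_0(\x)\in I\}$ gives
\[
\frac{d}{dt}\bigl(e^{itH}e^{-i\F(t,D_x)}g\bigr)=ie^{itH}\bigl(H-\pa_t\F(t,D_x)\bigr)e^{-i\F(t,D_x)}g=ie^{itH}\bigl(V-V(\pa_\x\F(t,D_x))\bigr)e^{-i\F(t,D_x)}g ,
\]
so everything reduces to the bound $\int_{\pm1}^{\pm\infty}\bignorm{\bigl(V-V(\pa_\x\F(t,D_x))\bigr)e^{-i\F(t,D_x)}g}\,dt<\infty$.

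To prove this I would first split with a phase-space cutoff $\chi(X/t)$, where $X$ is the position operator and $\chi\in C_c^\infty(\re^d\setminus\{0\})$ equals $1$ near the group-velocity set $\pa_\x p_0(\supp(Fg))$ (which is compact and disjoint from $0$). On $\supp(1-\chi(\cdot/t))$ the phase $n\cdot\x-\F(t,\x)$ is non-stationary: the estimates $|\pa_\x^\a(\F(t,\x)-tp_0(\x))|=O(\jap{t}^{1-\m})$ from Section~4 give $|\pa_\x^\a\F(t,\x)|\lesssim\jap{t}$ for $|\a|\ge1$ and $|n-\pa_\x\F(t,\x)|\gtrsim\jap{t}+|n|$ there, so repeated integration by parts yields $\bignorm{(1-\chi(X/t))e^{-i\F(t,D_x)}g}=O(\jap{t}^{-N})$ for every $N$; since $\bignorm{V-V(\pa_\x\F(t,D_x))}\le2\bignorm{V}_\infty$, the corresponding part of the integrand is integrable. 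For the remaining piece, the $[n]$-th component of $\chi(X/t)\bigl(V-V(\pa_\x\F(t,D_x))\bigr)e^{-i\F(t,D_x)}g$ equals
\[
(2\pi)^{-d/2}\,\chi(n/t)\int_{\torus^d}e^{i(n\cdot\x-\F(t,\x))}\bigl(V(n)-V(\pa_\x\F(t,\x))\bigr)Fg(\x)\,d\x ,
\]
where $V(n)$ denotes the $S^{-\m}(\re^d)$-extension, and the key point is that $V(n)-V(\pa_\x\F(t,\x))$ vanishes at the stationary points $n=\pa_\x\F(t,\x)$.

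Expand $V(n)-V(\pa_\x\F(t,\x))=\sum_{1\le|\a|\le K}\tfrac1{\a!}(\pa^\a V)(\pa_\x\F(t,\x))\,(n-\pa_\x\F(t,\x))^\a+R_K$. In each term of the sum the coefficient $(\pa^\a V)(\pa_\x\F(t,\x))$ depends only on $\x$, and since $|\pa_\x\F(t,\x)|\ge\tfrac12 c_0\jap{t}$ on $\supp(Fg)$ for large $|t|$, it and all its $\x$-derivatives are $O(\jap{t}^{-\m-|\a|})$; using $(n-\pa_\x\F(t,\x))^\a e^{i(n\cdot\x-\F(t,\x))}=i^{-|\a|}\pa_\x^\a e^{i(n\cdot\x-\F(t,\x))}$ and integrating by parts $|\a|$ times exhibits that term as $\chi(X/t)e^{-i\F(t,D_x)}F^*a_t$ with an $n$-independent amplitude $a_t$ supported in $\supp(Fg)$ and obeying $\bignorm{a_t}_{L^2}\lesssim\jap{t}^{-\m-1}$ (again using $|\pa_\x^\a\F(t,\x)|\lesssim\jap{t}$), so by unitarity of $e^{-i\F(t,D_x)}$ its $\ell^2$-norm is $O(\jap{t}^{-\m-1})$ — integrable for every $\m>0$, and no stationary phase is needed. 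The remainder obeys $|R_K|\lesssim|n-\pa_\x\F(t,\x)|^K$ since all derivatives of $V\in S^{-\m}$ are bounded; one then partitions $\supp(Fg)$ into a shrinking neighbourhood of the stationary set $\{\x : \pa_\x\F(t,\x)=n\}$, where the $K$ factors $(n-\pa_\x\F(t,\x))$ are small and, combined with the decay of $\pa^K V$ at infinity, make the integral small, and its complement, where the phase is non-stationary and integration by parts applies; choosing $K$ large in terms of $d$ and $\m$ then makes this contribution dominate the $O(\jap{t}^d)$ lattice points in $\supp\chi(\cdot/t)$ and be integrable. The \emph{main obstacle} is precisely this last oscillatory-integral estimate on the classically allowed region: the dispersion relation $p_0(\x)=\sum_j\cos\x_j$ of the discrete Laplacian has degenerate critical points, so $\pa_\x p_0$ is far from a diffeomorphism and the $\x$-stationary phase is not of the clean non-degenerate type; the estimate must be organized so that the decay gained from the $(n-\pa_\x\F(t,\x))$-factors and from the smallness of $\pa^K V$ dominates, and this is where the sharp construction and estimates of $\F$ in Section~4 (the $O(\jap{t}^{1-\m})$ control of $\F-tp_0$ in all $\x$-derivatives, and the fact that $\pa_\x\F(t,\cdot)$ stays in the classically allowed region) are used in an essential way.

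Once the strong limits $W_\pm^\F(I)$ are known to exist the rest is routine. Since $e^{-i\F(t,D_x)}$ is unitary on $\Ran E_I(H_0)$ and $e^{itH}$ is unitary, $\bignorm{W_\pm^\F(I)f}=\bignorm{E_I(H_0)f}$ for all $f$, so $W_\pm^\F(I)$ is isometric on $\Ran E_I(H_0)$. For the intertwining property, the Hamilton-Jacobi equation gives $\F(t+s,\x)-\F(t,\x)=s\,p_0(\x)+\int_t^{t+s}V(\pa_\x\F(\t,\x))\,d\t\to s\,p_0(\x)$ as $t\to\pm\infty$, uniformly on $\supp(Fg)$, so $e^{i\F(t+s,D_x)}e^{-i\F(t,D_x)}E_I(H_0)\to e^{isH_0}E_I(H_0)$ strongly; hence $e^{isH}W_\pm^\F(I)=W_\pm^\F(I)e^{isH_0}$ for all $s\in\re$, that is $HW_\pm^\F(I)=W_\pm^\F(I)H_0$, and then $\Ran W_\pm^\F(I)\subset\mathcal{H}_{ac}(H)$ follows from this intertwining relation together with the pure absolute continuity of $H_0=p_0(D_x)$.
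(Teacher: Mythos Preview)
Your overall architecture matches the paper's: Cook's method, the Hamilton--Jacobi identity to reduce to $\bigl(V(x)-V(\pa_\x\F(t,D_x))\bigr)e^{-i\F(t,D_x)}g$, a density argument, and non-stationary phase away from the propagation set. The treatment of the isometry and intertwining property at the end is also the standard one the paper defers to.

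The genuine difference is how the two proofs handle the integrand on the classically allowed region. The paper applies the full stationary phase expansion to $\f(t,x)$, obtaining $\f(t,x)=t^{-d/2}J(t,\y)\hat\f(\y)+O(\jap{t}^{-d/2-1})$ with $\y=G_t^{-1}(x)$; since the leading terms of $V(x)\f(t,x)$ and $V(\pa_\x\F(t,D_x))\f(t,x)$ then coincide pointwise (because $x=\pa_\x\F(t,\y)$), only the $O(\jap{t}^{-d/2-1-\m})$ remainders survive, and summing over the $O(\jap{t}^d)$ lattice points in $G_t(D')$ gives an integrable $O(\jap{t}^{-1-\m})$. To make the Hessian nondegenerate the paper simply enlarges the excision in the density class: it works with $D(I)=\{\x:p_0(\x)\in I,\ \cos\x_j\neq 0\ \forall j\}$, so that $\det\pa_\x^2 p_0=\prod_j(-\cos\x_j)\neq 0$ on $\supp\hat\f$. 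Your approach instead Taylor expands $V(n)-V(\pa_\x\F(t,\x))$ around $\pa_\x\F(t,\x)$ and converts each factor $(n-\pa_\x\F)$ into an integration by parts; for the Taylor terms this is clean and, unlike stationary phase, needs no Hessian condition. What each approach buys: the paper's argument is shorter and entirely classical once the extra restriction $\cos\x_j\neq 0$ is in place; yours is more flexible and sidesteps the Hessian for the main terms.

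Two points deserve tightening. First, the identity you write, $(n-\pa_\x\F)^\a e^{i(n\cdot\x-\F)}=i^{-|\a|}\pa_\x^\a e^{i(n\cdot\x-\F)}$, is literally false for $|\a|\geq 2$: commutator terms produce lower-order pieces with coefficients $\pa_\x^\c\F$, $|\c|\geq 2$. The conclusion you want still holds, because iterating $(n_j-\pa_{\x_j}\F)e^{i\psi}=-i\pa_{\x_j}e^{i\psi}$ and tracking powers shows every resulting amplitude is $O(\jap{t}^{-\m-1})$; but the sentence as written is incorrect. Second, and more important, the remainder $R_K$ is handled only by a sketch (``partition \dots near/far from the stationary set''), and you flag the possible Hessian degeneracy as the \emph{main obstacle}. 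In fact this obstacle is illusory: as in the paper you may simply add the harmless measure-zero restriction $\cos\x_j\neq 0$ to your dense class $\mathcal D$, after which the stationary points are nondegenerate and your near/far dichotomy becomes routine. Alternatively, and more in the spirit of your argument, you can apply the same iterated integration by parts to the remainder term: the coefficient $\int_0^1(1-s)^{K-1}(\pa^\a V)(sn+(1-s)\pa_\x\F)\,ds$ and all its $\x$-derivatives are $O(\jap{t}^{-\m-K})$ on $\supp\chi(\cdot/t)$ (since $|sn+(1-s)\pa_\x\F|\gtrsim t$ there), so after integrating by parts one is left with an $n$-dependent amplitude of size $O(\jap{t}^{-\m-\lceil K/2\rceil})$; the crude count of $O(\jap{t}^d)$ lattice points then gives an $\ell^2$-norm $O(\jap{t}^{d/2-\m-\lceil K/2\rceil})$, integrable once $K>d+2$. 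Either fix closes the gap.
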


We expect the asymptotic completeness: $\Ran W_\pm^\F(I) =\mathcal{H}_{ac}(H)$ holds, 
though we do not discuss it in this paper. 

The long-range scattering theory for Schr\"odinger equation has long history, 
starting from the pioneering work by Dollard \cite{Do}. We refer Reed-Simon \cite{RS} \S XI.9, 
Yafaev \cite{Ya} Chapter~10, Derezinski-G\'erard \cite{DG} \S4.7, and references therein. 
We follow the argument of H\"ormander \cite{Ho} in this paper to construct modified 
wave operators. 

We prepare a simple extension lemma in Section~2. We discuss the classical mechanics 
on $\torus^d$ in Section~3, and solutions to the Hamilton-Jacobi equation are constructed in 
Section~4. We prove Theorem~1.1 in Section~5. 

\section{Prelininaries}

Here we construct an extension of $V\in S^m(\ze^d)$ to a smooth function $\tilde V(x)$ on $\re^d$. 

\begin{lem}
Suppose $V[\cdot]\in S^m(\ze^d)$. Then there is $\tilde V\in S^{m}(\re^d)$ such that 
it is real-valued and $\tilde V(n)=V[n]$ for any $n\in\ze^d$. 
\end{lem}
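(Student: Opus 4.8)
The plan is to extend $V$ from $\ze^d$ to $\re^d$ by convolution with a smooth bump function supported in the unit cube, using a partition-of-unity type construction. Concretely, fix $\chi \in C_c^\infty(\re^d)$ supported in $(-1,1)^d$ with $\sum_{n\in\ze^d}\chi(x-n)\equiv 1$ (such a function exists: start from any nonnegative $\chi_0\in C_c^\infty$ supported in $(-1/2,1/2)^d$ with $\chi_0>0$ near the origin, periodize the sum $S(x)=\sum_n \chi_0(x-n)$, which is smooth, positive, and $2\pi$-unnecessary—just $\ze^d$-periodic—and set $\chi=\chi_0/S$; this $\chi$ is smooth, supported in a fixed cube, and its $\ze^d$-translates form a partition of unity). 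Then define
\[
\tilde V(x) = \sum_{n\in\ze^d} V[n]\,\chi(x-n), \quad x\in\re^d.
\]
Since $\chi$ has compact support, the sum is locally finite, so $\tilde V\in C^\infty(\re^d)$, it is real-valued because $V$ is, and $\tilde V(n)=V[n]$ since only the term $m=n$ survives at integer points (the translates of $\chi$ by distinct lattice points have disjoint... actually overlapping supports, but $\chi(n-m)=\delta_{nm}$ would require $\chi$ to vanish at nonzero integers, which does hold since $\supp\chi\subset(-1,1)^d$, so $\chi(k)=0$ for $k\in\ze^d\setminus\{0\}$ and $\chi(0)=S(0)^{-1}\chi_0(0)$; to get $\chi(0)=1$ use that $S(0)=\sum_n\chi_0(n)=\chi_0(0)$ as the only surviving term, hence $\chi(0)=1$). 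Thus the interpolation property holds.

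It remains to verify the symbol estimates $\bigabs{\pa_x^\a \tilde V(x)}\leq C_\a\jap{x}^{m-|\a|}$. The key observation is that for $x$ in a bounded neighborhood of a given lattice point, only finitely many terms (a number bounded independently of $x$) contribute to the sum, and those involve $V[n]$ with $\jap{n}\sim\jap{x}$. Differentiating, $\pa_x^\a\tilde V(x)=\sum_n V[n]\,(\pa^\a\chi)(x-n)$; for fixed $x$ the nonzero terms have $|x-n|\le\sqrt d$, so $\jap{n}\le C\jap{x}$ and $\jap{n}\ge c\jap{x}$, giving $\bigabs{\pa_x^\a\tilde V(x)}\le \sum_{|x-n|\le\sqrt d}\abs{V[n]}\,\bignorm{\pa^\a\chi}_\infty \le C_\a'\jap{x}^m$. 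This gives the right bound but only with exponent $m$, not $m-|\a|$ — which is exactly the main obstacle and the reason the discrete derivative hypothesis in Definition~2.1 is essential rather than merely $\abs{V[n]}\le C\jap{n}^m$.

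To recover the decay gain in the derivatives, I would rewrite the differentiated sum using \emph{summation by parts} (Abel summation / the discrete-to-continuous analogue), trading each continuous derivative $\pa_j$ falling on $\chi(x-n)$ against a discrete difference $\tilde\pa_j$ acting on the coefficients $V[n]$. The mechanism: since $\sum_n\chi(x-n)\equiv 1$, we have $\sum_n (\pa_j\chi)(x-n)\equiv 0$, and more usefully one can write $(\pa_j\chi)(x-n)$ in terms of a telescoping structure. Precisely, pick $\psi_j\in C_c^\infty(\re^d)$ with $\psi_j(x)-\psi_j(x-e_j)=-\partial_j\chi$... this is a discrete Poisson-type equation on the lattice that is solvable because $\partial_j\chi$ has zero sum over each line $n+\ze e_j$ (using $\sum_k\chi(x-n-ke_j)$ is constant in the $e_j$-direction... hmm, it is $\ze$-periodic in that variable, hence its $\partial_j$-derivative sums to zero over the period). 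Granting such $\psi_j$ with the same support/smoothness quality as $\chi$, reindexing the sum yields $\pa_j\tilde V(x)=\sum_n (\tilde\pa_j V)[n]\,\psi_j(x-n)$, and now iterating, $\pa_x^\a\tilde V(x)=\sum_n (\tilde\pa^\a V)[n]\,\psi_\a(x-n)$ for suitable compactly supported smooth $\psi_\a$. Applying the crude bound from the previous paragraph to this representation and invoking $\bigabs{\tilde\pa^\a V[n]}\le C_\a\jap{n}^{m-|\a|}$ gives $\bigabs{\pa_x^\a\tilde V(x)}\le C_\a'\jap{x}^{m-|\a|}$, which is exactly what is needed.

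The main obstacle is thus the construction of the auxiliary functions $\psi_j$ (equivalently, setting up the summation-by-parts cleanly in several variables): one must check solvability of the lattice difference equation $\psi_j(\cdot)-\psi_j(\cdot-e_j)=-\pa_j\chi$ within $C_c^\infty$, which follows from the vanishing of the relevant line-sums, and then confirm that iterating in different coordinate directions is consistent (the discrete differences $\tilde\pa_j$ commute, and the $\psi$'s can be taken compactly supported at each stage, so there is no growth in support). Everything else — smoothness, local finiteness, the interpolation identity, and the final crude estimate — is routine. I would present the convolution construction, state the summation-by-parts identity as a short sublemma, and then assemble the estimate in two lines.
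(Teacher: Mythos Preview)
Your physical-space construction is a legitimate alternative to the paper's, but the summation-by-parts step has a real gap. The paper works on the Fourier side: it takes $\hat\chi_0\in C_0^\infty(\re^d)$ supported in $\bigbrac{-\tfrac32\pi,\tfrac32\pi}^d$ (so $\chi_0$ is Schwartz, not compactly supported) and converts $\pa_j$ to $\tilde\pa_j$ via the multiplier identity
\[
i\xi_j\,\hat\chi_0(\xi)=\frac{e^{i\xi_j/2}\,\xi_j}{2\sin(\xi_j/2)}\,\hat\chi_0(\xi)\cdot(1-e^{-i\xi_j}),
\]
the point being that the bracketed factor is smooth on $\supp\hat\chi_0$ because $\sin(\xi_j/2)$ vanishes on that support only at $\xi_j=0$, where the quotient is regular. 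This immediately gives a new Schwartz kernel $\chi_j$ with $\pa_j\tilde V(x)=(2\pi)^{-d/2}\sum_n\chi_j(x-n)\,\tilde\pa_j V[n]$, and one iterates.

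Your route instead needs $\psi_j\in C_c^\infty(\re^d)$ solving $\psi_j(y)-\psi_j(y-e_j)=\pa_j\chi(y)$. The necessary and sufficient condition for a compactly supported solution is that $\sum_{k\in\ze}(\pa_j\chi)(y+ke_j)\equiv 0$, i.e., that the one-variable periodization $g_j(y)=\sum_k\chi(y+ke_j)$ be \emph{constant} in $y_j$. The full lattice partition of unity only makes $g_j$ \emph{periodic} in $y_j$; your justification ``its $\pa_j$-derivative sums to zero over the period'' conflates $\int_0^1\pa_j g_j\,dy_j=0$ (true but irrelevant) with $\pa_j g_j\equiv 0$ (what is needed). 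For a generic $\chi=\chi_0/S$ the line-sum condition fails and no compactly supported $\psi_j$ exists --- on the Fourier side this is the fact that $i\xi_j/(1-e^{-i\xi_j})$ has poles at $\xi_j\in 2\pi\ze\setminus\{0\}$, where the transform of a compactly supported $\chi$ cannot vanish to infinite order.

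The fix is easy: take $\chi$ to be a tensor product $\chi(x)=\prod_{i}\eta(x_i)$ with $\eta\in C_c^\infty(\re)$, $\supp\eta\subset(-1,1)$, $\sum_k\eta(\cdot-k)\equiv 1$. Then each one-dimensional periodization is identically $1$, the solvability condition holds in every direction, $\psi_j(y)=-\sum_{k\ge 1}(\pa_j\chi)(y+ke_j)$ is compactly supported, and your iteration goes through cleanly. With that amendment the argument is correct and gives a pleasant spatial counterpart to the paper's Fourier computation.
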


\begin{proof}
We interpolate $V[n]$ using a {\it window function}\/ (see, e.g., Oppenheim-Schafer-Buck \cite{OS} \S7.2). 
Let $\hat\chi_0\in C_0^\infty(\re^d)$ such that 
\begin{enumerate}
\renewcommand{\labelenumi}{(\roman{enumi}) }
\item $\sum_{n\in\ze^d} \hat\chi_0(\x+2\pi n)=1$, $\x\in\re^d$. 
\item $\supp \hat\chi_0(\x)\subset \bigbrac{-\frac32\pi,\frac32\pi}^d$. 
\item $\hat\chi_0$ is even, and $\hat\chi_0(\x)\geq 0$, $\x\in\re^d$. 
\end{enumerate}
Such $\hat \chi_0$ is called a window function. Let $\chi_0$ be the inverse Fourier transform 
of $\hat\chi_0$, and we set
\[
\tilde V(x)=(2\pi)^{-d/2} \sum_{n\in\ze^d} \chi_0(x-n) V[n], \quad x\in\re^d.
\]
Since $\chi_0\in\mathcal{S}(\re^d)$ is real-valued by (iii)  and $V[n]=O(\jap{n}^m)$, it is easy to see 
$\tilde V$ is a smooth real-valued function on $\re^d$. On the other hand, 
if $k\in\ze^d$, then 
\begin{align*}
\chi_0(k) &= (2\pi)^{-d/2} \int e^{ik\cdot x} \chi_0(\x)d\x\\
&= (2\pi)^{-d/2} \int_{[-\pi,\pi]^d} e^{ik\cdot x}d\x 
=(2\pi)^{d/2} \d_{k0}, 
\end{align*}
by virtue of the property (i). This implies $\tilde V(n)=V[n]$ for $n\in \ze^d$. 

We now show $\tilde V\in S^{m}(\re^d)$. It is easy to see $|\tilde V(x)|\leq C\jap{x}^{m}$ 
since $\chi_0\in \mathcal{S}(\re^d)$ and $V[n]=O(\jap{n}^{m})$ as $|n|\to\infty$. 

We note $Fu=\mathcal{F}\bigpare{\sum_n \d(x-n)u[n]}$, 
where $F$ is the discrete Fourier transform and $\mathcal{F}$ is the standard Fourier transform. 
We set 
\[
\hat V(\x)= FV(\x)\ \in \mathcal{E}'(\torus^d).
\]
We note the Fourier transform of $\tilde V(\cdot)$ is given by 
$(\mathcal{F}\tilde V)(\x)=\hat\chi_0(\x)\hat V(\x)$. We also note 
\begin{align*}
& F(\tilde\pa_j V)(\x)= (1-e^{-i\x_j})\hat V(\x) =2i e^{-i\x_j/2}\sin(\x_j/2)\hat V(\x), \\
& \mathcal{F}(\pa_j \tilde V)(\x) =i\x_j \hat\chi_0(\x)\hat V(\x).
\end{align*}
Since $\hat \chi_0$ is supported in $\bigbrac{-\frac32\pi,\frac32\pi}^d$, we may write 
\begin{align*}
\mathcal{F}(\pa_j\tilde V)(\x) &= \biggbrac{\frac{e^{i\x_j/2}\x_j}{2\sin(\x_j/2)}\hat\chi_0(\x)}
F(\tilde\pa_j V)(\x) \\
&= \hat\chi_j(\x) F(\tilde\pa_j V)(\x),
\end{align*}
with $\hat \chi_j\in C_0^\infty(\re^d)$. Thus we learn 
\[
\pa_j \tilde V(x) =(2\pi)^{-d/2} \sum_{n\in\ze^d} \chi_j(x-n)\tilde\pa_j V[n],
\]
where $\chi_j=\mathcal{F}^*\hat \chi_j$. Since $\tilde \pa_j V[n]=O(\jap{n}^{m-1})$ as $|n|\to\infty$, 
we have $\pa_j \tilde V(x) = O(\jap{x}^{m-1})$ as $|x|\to\infty$. 
Repeating this procedure, we conclude $\tilde V\in S^{m}(\re^d)$. 
\end{proof}

\section{Classical mechanics}

In the following we suppose $V\in S^{-\m}(\ze^d)$ with $\m>0$. 
Let $\tilde V(x)\in S^{m}(\re^d)$ be an extension of $V[n]$, and we write $\tilde V(x)=V(x)$ for simplicity. 
The existence of such $\tilde V$ is shown in Lemma~2.1, but it is not unique, and  
we may choose different extension. For example, if $V[n]=c\jap{n}^{-\m}$, then 
it is natural to choose $\tilde V(x)=c \jap{x}^{-\m}$, which is different from the extension in Lemma~2.1. 

We now construct a classical mechanics on $T^*\torus^d$ corresponding to 
the discrete Schr\"odinger operator. We write  the symbols of $H$ and $H_0$ by 
\[
p(x,\x)= p_0(\x)+ V(x), \quad p_0(\x)=\sum_{j=1}^d \cos(\x_j)
\]
for $(x,\x)\in\re^d\times\torus^d\cong T^*\torus^d$, respectively. 
It is easy to see 
\[
H_0u=F^* \bigpare{p_0(\x) (Fu)(\x)}\quad \text{for }\  u\in\ell^2(\ze^d).
\]

We consider the solutions to the Hamilton equation 
\[
\frac{d}{dt} x_j(t) =\frac{\pa p}{\pa \x_j}(x,\x) =\sin(\x_j), \quad
\frac{d}{dt} \x_j(t) = -\frac{\pa p}{\pa x_j}(x,\x) =-\frac{\pa V}{\pa x_j}(x),
\]
with an initial condition $(x(0),\x(0))=(x_0,\x_0)\in \re^d\times\torus^d$. 
In this section, we study long-time behavior of the solution. 

Let $\mathcal{T} = \{-d.-d+2,\dots, d-2, d\}$ 
be the set of threshold energies for $p_0(\x)$. 
Note 
\begin{align*}
\sin(\x_j)=0, \ j=1,\dots, d \ &\Longleftrightarrow \ 
\x_j\in\pi\ze, \ j=1,\dots, d \\
&\Longrightarrow \ p_0(\x)\in\mathcal{T}. 
\end{align*}
We denote the velocity by 
\[
v(\x) = (\sin(\x_1),\cdots, \sin(\x_d))\in\re^d, \quad \x\in\torus^d.
\]
We compute 
\begin{align}
\frac{d}{dt} |x(t)|^2 &= 2\sum_{j=1}^d v_j(\x(t))x_j(t), \nonumber \\
\frac{d^2}{dt^2}|x(t)|^2 &= 2\sum_{j=1}^d v_j(\x(t))^2 -2\sum_{j=1}^d x_j(t)\frac{\pa V}{\pa x_j}(x(t)).
\end{align}
We write 
\[
k(\x)=|v(\x)|^2 =\sum_{j=1}^d (\sin(\x_j))^2.
\]
We note $k(\x)>0$ if $p_0(\x)\notin \mathcal{T}$. 
For $I\subset (-d,d)$, we set
\[
\O_\pm(I,R)=\bigset{(x,\x)\in\re^d\times\torus^d}{p(x,\x)\in I, |x|\geq R, \pm x\cdot v(\x)\geq 0}.
\]

\begin{prop}
Let $I\Subset [-d,d]\setminus \mathcal{T}$ and $\O_\pm(I,R)$ be as above. 
Then there are $R_0>0$ and $\d>0$ such that 
\[
|x(t)|\geq \sqrt{R^2+\d t^2} \quad \text{for }\pm t\geq 0,
\]
if the initial condition $(x_0,\x_0)\in\O_\pm(I,R)$ with $R\geq R_0$. 
\end{prop}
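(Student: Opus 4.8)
The plan is to study $f(t):=|x(t)|^2$ and to upgrade the identity for $\tfrac{d^2}{dt^2}|x(t)|^2$ recorded above into a genuine convexity bound $f''\geq 2\d$ which holds once $|x(t)|$ is large. In that identity the potential term is harmless: since $V\in S^{-\m}(\re^d)$ with $\m>0$, we have $\bigabs{x_j\,\pa_{x_j}V(x)}\leq C\jap{x}^{-\m}\to 0$ as $|x|\to\infty$. So the whole matter reduces to bounding $k(\x(t))$ from below along the trajectory, and for this I would use conservation of energy: $p_0(\x(t))+V(x(t))=p(x_0,\x_0)\in I$, hence $p_0(\x(t))=p(x_0,\x_0)-V(x(t))$.

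Choose $\e>0$ so small that $I'':=\bigset{s\in\re}{\dist(s,\bar I)\leq\e}\subset[-d,d]\setminus\mathcal{T}$; this is possible because $I\Subset[-d,d]\setminus\mathcal{T}$. The set $\bigset{\x\in\torus^d}{p_0(\x)\in I''}$ is compact (as $\torus^d$ is compact and $p_0$ continuous), and on it $k$ is continuous and strictly positive, since $k(\x)=0$ forces $\x\in(\pi\ze)^d$ and hence $p_0(\x)\in\mathcal{T}$; thus $k(\x)\geq 3\d$ on that set for some $\d>0$. Now fix $R_0>0$ so large that, for $|x|\geq R_0$, one has both $\abs{V(x)}\leq\e$ and $2\bigabs{\sum_{j=1}^d x_j\,\pa_{x_j}V(x)}\leq\d$. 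Then for every $t$ with $|x(t)|\geq R_0$ we get $p_0(\x(t))\in I''$, so $k(\x(t))\geq 3\d$, and therefore
\[
f''(t)=2k(\x(t))-2\sum_{j=1}^d x_j(t)\,\pa_{x_j}V(x(t))\geq 6\d-\d\geq 2\d .
\]

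It remains to run a continuity argument on $t\geq 0$; the case $t\leq 0$ is identical after $t\mapsto -t$, using the sign condition in $\O_-$. The hypothesis $(x_0,\x_0)\in\O_+(I,R)$ with $R\geq R_0$ gives $f(0)=|x_0|^2\geq R^2\geq R_0^2$ and $f'(0)=2x_0\cdot v(\x_0)\geq 0$. Let $t_0:=\inf\bigset{t\geq 0}{f(t)<R_0^2}$ (with $\inf\emptyset=+\infty$); I claim $t_0=+\infty$. Indeed, on $[0,t_0)$ we have $f(t)\geq R_0^2$, so the bound $f''\geq 2\d$ applies there, whence $f'(t)\geq f'(0)+2\d t\geq 0$, so $f$ is nondecreasing on $[0,t_0)$ and $f(t)\geq f(0)\geq R^2$; if $t_0<\infty$ then continuity gives $f(t_0)=R_0^2$, while strict monotonicity (note $f'(t)\geq 2\d t>0$ for $0<t<t_0$) forces $f(t_0)>R_0^2$ once $t_0>0$, and the remaining possibility $t_0=0$ (which entails $f(0)=R_0^2$) is excluded by $f'(0)\geq 0$ together with $f''(0)\geq 2\d>0$. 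Hence $t_0=+\infty$, so $f''(t)\geq 2\d$ for all $t\geq 0$, giving $f'(t)\geq 2\d t$ and then $f(t)\geq f(0)+\d t^2\geq R^2+\d t^2$, i.e.\ $|x(t)|\geq\sqrt{R^2+\d t^2}$.

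The one genuinely delicate point is the apparent circularity — the lower bound on $k(\x(t))$ needs $|x(t)|$ large, which is exactly the conclusion being proved — and it is resolved by the bootstrap above: at the first potential escape time the convexity has already been in force long enough to prevent escape. The compactness argument yielding $k\geq 3\d$ on $\bigset{\x}{p_0(\x)\in I''}$ and the choice of $R_0$ absorbing the $S^{-\m}$ remainder are the other places needing slight care; both are routine.
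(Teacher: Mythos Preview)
Your proof is correct and follows essentially the same approach as the paper: use energy conservation to pin $p_0(\x(t))$ in a compact set away from $\mathcal{T}$, bound $k(\x(t))$ below, and deduce $f''(t)\geq 2\d$. The paper's version is terser and does not spell out the bootstrap/continuity step you handle via $t_0$, but the underlying argument is identical.
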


\begin{proof} We choose $\d>0$ so that 
\[
(I+[-\d,\d])\cap\mathcal{T}=\emptyset, \quad 
\inf\bigset{k(\x)}{p_0(\x)\in I+[-\d,\d]}\geq 2\d.
\]
We also choose $R_0>0$ so that 
\[
|V(x)|\leq \d, \quad |x\cdot\nabla V(x)|\leq \d \quad \text{if } |x|\geq R_0.
\]
If $|x|\geq R_0$ and $p(x,\x)\in I$, then 
$p_0(x,\x)\in I+[-\d,\d]$, and hence $k(\x)\geq 2\d$. By virtue of (3.1), we then learn 
\[
\frac{d^2}{dt^2}|x(t)|^2 = 2k(\x) -2x\cdot\nabla V(x) \geq 2\d
\]
for such $(x,\x)$. If $(x_0,\x_0)\in\O_\pm(I,R)$ then $\pm\frac{d}{dt}|x(t)|^2 \geq 0$ at 
$t=0$, and hence 
\[
|x(t)|^2\geq |x_0|^2 +\d t^2 \quad \text{for } \pm t\geq 0.
\]
\end{proof}

This implies, in particular, 
\[
|x(t)|\geq \frac{1}{\sqrt{2}} (R+\sqrt{\d}|t|)
\quad \text{for }\pm t\geq 0.
\]
Once this estimate is established, following estimates are proved exactly same way as in the Euclidean 
space case (see, e.g., \cite{Ho}, \cite{CKS}). 
We denote the solution to the Hamilton equation with the initial condition $(y,\y)$ by 
\[
x(t)=x(t,y,\y), \quad \x(t)=\x(t,y,\y).
\]

\begin{prop}
Let $(x_0,\x_0)\in\O_\pm(I,R)$ and $x(t)=x(t,x_0,\x_0)$, $\x(t)=\x(t,x_0,\x_0)$. Then 
\[
\x_\pm =\lim_{t\to\pm\infty} \x(t)
\]
exists. Moreover, 
\begin{align}
&|\x(t)-\x_\pm|\leq C\jap{t}^{-\m}, \quad \pm t\geq 0, \\
& |x(t)-tv(\x_\pm)| \leq C \jap{t}^{1-\m}, \quad \pm t\geq 0. 
\end{align}
\end{prop}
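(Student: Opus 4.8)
The plan is to bootstrap from the quantitative escape estimate of Proposition~3.2, which gives $|x(t)| \geq \tfrac{1}{\sqrt 2}(R+\sqrt\d|t|)$ along the trajectory. Combined with the bound $|\nabla V(x)| \leq C\jap{x}^{-\m-1}$ coming from $V \in S^{-\m}(\re^d)$, this immediately yields
\[
\Bigabs{\frac{d}{dt}\x_j(t)} = \Bigabs{\frac{\pa V}{\pa x_j}(x(t))} \leq C\jap{x(t)}^{-\m-1} \leq C'\jap{t}^{-\m-1}, \quad \pm t\geq 0.
\]
Since $-\m-1 < -1$, the right-hand side is integrable on $[0,\pm\infty)$, so $\x(t)$ is Cauchy as $t\to\pm\infty$ and the limit $\x_\pm = \lim_{t\to\pm\infty}\x(t)$ exists. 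Integrating the same bound from $t$ to $\pm\infty$ gives $|\x(t)-\x_\pm| \leq C\int_{|t|}^\infty \jap{s}^{-\m-1}\,ds \leq C\jap{t}^{-\m}$, which is (3.2). One small point to check: we must verify that the whole forward (resp.\ backward) trajectory starting in $\O_\pm(I,R)$ stays in the region where the estimates of Proposition~3.2 apply — but this is exactly the content of that proposition (the bound $|x(t)|\geq R$ persists and $\pm x(t)\cdot v(\x(t))$ stays nonnegative once $\tfrac{d}{dt}|x(t)|^2$ has a sign), so no extra work is needed.

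For (3.3), use $\dot x_j(t) = \sin(\x_j(t)) = v_j(\x(t))$ and write
\[
x_j(t) - t\,v_j(\x_\pm) = x_j(0) + \int_0^t \bigpare{v_j(\x(s)) - v_j(\x_\pm)}\,ds - \text{(sign-dependent boundary term at }0),
\]
i.e.\ more cleanly $x_j(t) - t v_j(\x_\pm) = x_j(0) + \int_0^t \pare{v_j(\x(s)) - v_j(\x_\pm)}\,ds$. Since $v$ is smooth (hence Lipschitz) on $\torus^d$, $|v_j(\x(s)) - v_j(\x_\pm)| \leq C|\x(s)-\x_\pm| \leq C\jap{s}^{-\m}$ by (3.2). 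Integrating, $\bigabs{\int_0^t \jap{s}^{-\m}\,ds} \leq C\jap{t}^{1-\m}$ when $\m<1$ (and $\leq C\log\jap{t} \leq C\jap{t}^{1-\m}$ trivially, or one argues directly, when $\m\geq 1$; in any case the bound $C\jap{t}^{1-\m}$ holds, absorbing the constant $|x_j(0)|$ when $\m < 1$ since $\jap{t}^{1-\m}\to\infty$, and noting that for $\m\geq 1$ the trajectory is asymptotically free so the same estimate is immediate). This gives $|x(t) - t v(\x_\pm)| \leq C\jap{t}^{1-\m}$ for $\pm t \geq 0$.

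The only genuinely delicate point is the book-keeping of constants and the case distinction in $\m$: for $\m < 1$ the bound (3.3) is meaningful and the additive constant $|x(0)|$ is harmless, while for $\m \geq 1$ one should observe that (3.2) already forces $\x(t)$ to converge so fast that $x(t) - tv(\x_\pm)$ is in fact bounded, which is stronger than (3.3) as stated. Everything else is the standard Euclidean argument referenced after Proposition~3.2; the discrete setting enters only through the specific form $v(\x) = (\sin\x_1,\dots,\sin\x_d)$ and the fact that $p_0$ and hence $v$ are smooth and bounded on the compact torus $\torus^d$, which is if anything more favorable than the $\re^d$ case. I would present the two estimates in the order above, deriving (3.2) first and then feeding it into (3.3).
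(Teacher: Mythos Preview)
Your argument is correct and is precisely the standard Euclidean argument the paper defers to; the paper gives no detailed proof, only the remark that one first proves $|x(t)-tv(\xi(t))|\le C\jap{t}^{1-\mu}$ from the Hamilton equation and then combines this with (3.2) to get (3.3). You instead pass directly from (3.2) to (3.3) by integrating $v(\xi(s))-v(\xi_\pm)$, which is the same computation organized slightly differently and arguably cleaner. One small inaccuracy in your side remarks: for $\mu>1$ boundedness of $x(t)-tv(\xi_\pm)$ is \emph{weaker}, not stronger, than $C\jap{t}^{1-\mu}$ (which would force decay), and at $\mu=1$ a logarithm appears in either approach; this is a wrinkle in the stated estimate itself and is immaterial in the long-range regime $0<\mu<1$ the paper is concerned with.
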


We note, 
\[
|x(t)-t v(\x(t))|\leq C\jap{t}^{1-\m}
\]
is proved at first by the Hamilton equation, and combining this with (3.2) we obtain (3.3). 

\begin{prop}
There is $C>0$ such that 
\[
\biggabs{\frac{\pa}{\pa y} \x(t,y,\y)} \leq C R^{-1-\m}, 
\quad \biggabs{\frac{\pa}{\pa \y} \x(t,y,\y)} \leq C R^{-\m}
\]
for $R\geq R_0$, $(y,\y)\in \O_\pm(I,R)$, $\pm t\geq 0$. 
Moreover, for any $\a,\b\in\ze_+^d$, there is $C_{\a\b}>0$ such that 
\[
\biggabs{\pa_y^\a\pa_\y^\b (x(t,y,\y)-y)} \leq C_{\a\b} |t|, 
\quad \biggabs{\pa_y^\a\pa_\y^\b \x(t,y,\y)} \leq C_{\a\b} 
\]
for any $(y,\y)\in\O_\pm(I,R)$, $\pm t\geq 0$. 
\end{prop}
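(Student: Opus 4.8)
The plan is to prove the two families of estimates by differentiating the Hamilton equations with respect to the initial data $(y,\y)$ and applying a Gr\"onwall-type argument, using the lower bound $|x(t,y,\y)|\geq \tfrac{1}{\sqrt2}(R+\sqrt\d|t|)$ from Proposition~3.1 together with the decay $V\in S^{-\m}(\re^d)$ to control the forcing terms. Write $J(t)=\pa_y\x(t,y,\y)$, $K(t)=\pa_\y\x(t,y,\y)$, $A(t)=\pa_y x(t,y,\y)$, $B(t)=\pa_\y x(t,y,\y)$. Differentiating the Hamilton equations $\dot x_j=\sin\x_j$, $\dot\x_j=-\pa_{x_j}V(x)$ gives the variational system
\begin{align*}
\dot A &= (\operatorname{diag}\cos\x)\,J, & \dot J &= -(\operatorname{Hess}V)(x)\,A, \\
\dot B &= (\operatorname{diag}\cos\x)\,K, & \dot K &= -(\operatorname{Hess}V)(x)\,B,
\end{align*}
with initial data $A(0)=I$, $J(0)=0$, $B(0)=0$, $K(0)=I$. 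Since $V\in S^{-\m}$ we have $\|\operatorname{Hess}V(x)\|\leq C\jap{x}^{-\m-2}$, so along the trajectory $\|\operatorname{Hess}V(x(t))\|\leq C(R+|t|)^{-\m-2}$, which is integrable in $t$ over $[0,\infty)$ with integral $O(R^{-\m-1})$.

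First I would treat the $\pa_y$ block. From $\dot J=-\operatorname{Hess}V(x)A$ and $\dot A=(\operatorname{diag}\cos\x)J$ one gets $\|A(t)\|\leq 1+\int_0^{|t|}\|J(s)\|\,ds$ and $\|J(t)\|\leq \int_0^{|t|}\|\operatorname{Hess}V(x(s))\|\,\|A(s)\|\,ds$. Feeding the first into the second and using that $\int_0^\infty\|\operatorname{Hess}V(x(s))\|\,ds\leq CR^{-\m-1}$ is small for $R\geq R_0$, a Gr\"onwall/bootstrap argument yields $\sup_t\|A(t)\|\leq 2$ and hence $\sup_t\|J(t)\|\leq CR^{-\m-1}$, which is the claimed bound on $\pa_y\x$. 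The same scheme with $B,K$ gives $\sup_t\|B(t)\|\leq C$ (here the zeroth-order-in-$R$ bound, since $B(0)=0$ is fed by $\dot B=(\operatorname{diag}\cos\x)K$ with $K(0)=I$), and then $\|K(t)\|\leq \int_0^{|t|}\|\operatorname{Hess}V(x(s))\|\,\|B(s)\|\,ds\leq CR^{-\m}$, giving the bound on $\pa_\y\x$. Note the asymmetry $R^{-1-\m}$ versus $R^{-\m}$ comes precisely from whether the ``source'' matrix ($A$ or $B$) starts at size $O(1)$ already at $t=0$ or only builds up.

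Next I would prove the higher-order bounds $|\pa_y^\a\pa_\y^\b(x(t,y,\y)-y)|\leq C_{\a\b}|t|$ and $|\pa_y^\a\pa_\y^\b\x(t,y,\y)|\leq C_{\a\b}$ by induction on $|\a|+|\b|$, with the base case $|\a|+|\b|\leq 1$ already in hand (the $\x$-derivatives from above, and $x(t)-y=\int_0^t v(\x(s))\,ds$ with $|v|\leq\sqrt d$ giving $|x(t)-y|\leq\sqrt d\,|t|$; its first derivatives are $A-I$ and $B$, bounded by the first part). For the inductive step, differentiate the Hamilton equations $\a+\b$ times: the resulting equations for $\pa_y^\a\pa_\y^\b x$ and $\pa_y^\a\pa_\y^\b\x$ have the same linear structure as the variational system, with an inhomogeneous term that is a polynomial in lower-order derivatives of $x,\x$ and in derivatives $\pa^\gamma V(x(t))$ with $|\gamma|\geq 2$. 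Using $|\pa^\gamma V(x(t))|\leq C(R+|t|)^{-\m-|\gamma|}$, which is integrable in $t$, together with the inductive bounds on the lower-order terms, a Gr\"onwall estimate closes the induction: the $\x$-derivatives stay bounded and the $x$-derivatives grow at most linearly (the linear growth coming from the top-order term $\pa_y^\a\pa_\y^\b x$ appearing inside $v(\x)$ after one integration, exactly as for $|\a|+|\b|\leq 1$).

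The main obstacle is organizing the combinatorics of the higher-order variational equations and checking that every inhomogeneous term is genuinely integrable in $t$ after inserting the inductive hypotheses — in particular one must be careful that a factor like $\pa_y^\a\pa_\y^\b x$, which is only $O(|t|)$ rather than bounded, never gets multiplied by something non-integrable; it always comes paired with a derivative of $V$ of order $\geq 2$, hence with decay $(R+|t|)^{-\m-2}$, so $|t|\cdot(R+|t|)^{-\m-2}$ is still integrable. Once this bookkeeping is set up the estimates follow mechanically, and these bounds are exactly the standard ones used in \cite{Ho}; the only new input is that the lower bound on $|x(t)|$ now comes from Proposition~3.1 rather than from the free flow.
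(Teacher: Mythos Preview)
The paper does not actually prove this proposition: immediately after Proposition~3.1 it says that once the lower bound $|x(t)|\geq\tfrac{1}{\sqrt2}(R+\sqrt\d\,|t|)$ is available the remaining estimates ``are proved exactly same way as in the Euclidean space case'', citing \cite{Ho} and \cite{CKS}, and then states Propositions~3.2--3.3 without argument. Your strategy --- differentiate the Hamilton equations, feed in the decay of $\operatorname{Hess}V$ along the trajectory, and run a Gr\"onwall/bootstrap, then induct for the higher derivatives --- is exactly that standard argument, so the approach matches.

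There is, however, a real slip in your first-order step. You assert $\sup_t\|A(t)\|\leq 2$ and $\sup_t\|B(t)\|\leq C$, but neither $A=\pa_y x$ nor $B=\pa_\y x$ is bounded in~$t$: already for $V\equiv 0$ one has $B(t)=t\,\operatorname{diag}\cos\y$, and with $V\neq 0$ the matrix $A(t)-I=\int_0^t(\operatorname{diag}\cos\x)J\,ds$ also grows linearly, since $J(s)$ converges to a generically nonzero limit. Hence the bootstrap as you wrote it does not close. The fix is to bootstrap with the \emph{linear} bounds $\|A(t)\|\leq 2(1+|t|)$, $\|B(t)\|\leq 2|t|$, $\|K(t)\|\leq 2$, and to use not only $\int_0^\infty\|\operatorname{Hess}V(x(s))\|\,ds\leq CR^{-\m-1}$ but also
\[
\int_0^\infty s\,\bignorm{\operatorname{Hess}V(x(s))}\,ds \leq C\int_0^\infty s\,(R+s)^{-\m-2}\,ds \leq C R^{-\m}.
\]
This yields $\|J(t)\|\leq CR^{-\m-1}+CR^{-2\m-1}$ and $\|K(t)-I\|\leq CR^{-\m}$, which closes for $R$ large and gives the stated bounds. (In particular the asymmetry $R^{-1-\m}$ versus $R^{-\m}$ comes from $A\approx I$ being bounded while $B\sim t$ grows linearly, which is the opposite of the explanation you gave; and your inequality for $K$ should read $\|K(t)-I\|\leq\int h\|B\|$ since $K(0)=I$ --- indeed the second estimate in the proposition is evidently meant as a bound on $\pa_\y(\x-\y)$.) Your higher-order induction is fine once the base case is corrected: every inhomogeneous term is a product $\pa^{m+1}V(x)\cdot\prod_i\pa^{\gamma_i}x$ with $m$ factors, of size at most $C(R+|t|)^{-\m-1-m}\cdot|t|^m\leq C(R+|t|)^{-\m-1}$, hence integrable.
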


\section{Construction of solutions to the Hamilton-Jacobi equation}

As before, we set $p(x,\x)= \sum\cos(\x_j) +V(x)$, 
and construct solutions to the Hamilton-Jacobi equation 
\begin{equation}\label{HJ}
\frac{\pa}{\pa t}\F_\pm(t,\x) = p(\pa_\x\F_\pm(t,\x),\x), 
\quad \pm t\geq 0
\end{equation}
for $\x\in \bigset{\x}{p_0(\x)\in I}$, where $I\Subset (-d,d)\setminus\mathcal{T}$. We suppose 
$I+[-\d,\d]\subset (-d,d)\setminus\mathcal{T}$ with some $\d>0$. 

The characteristic equations for \eqref{HJ} is 
\[
\x'=-\nabla V(x), \quad x'=v(\x), \quad u'= p(x,\x)-x\cdot\nabla V(x), 
\]
where $x=\frac{\pa}{\pa\x}\F_\pm$ and $u=\F_\pm$ on the characteristic curves. 
The first two equations are the Hamilton equation, and we solve the equation with the initial condition:
\[
u(0) = \pm R_1 p_0(\y), \quad \x(0)=\y\in \torus^d, 
\]
with sufficiently large $R_1>0$. Then 
\[
x(0)=\frac{\pa u}{\pa \x}(0)=\pm R_1 v(\y),
\]
and we choose $R_1$ so large that 
\[
R:=R_1\cdot \inf\bigset{|v(\y)|}{p_0(\y)\in I+[-\d,\d]}
\]
satisfies the condition of Proposition~3.1, and $CR^{-\m}\ll 1$ in Proposition~3.3. 
We denote
\[
\L_t\ :\ \y\mapsto \x(t,\pm R_1v(\y),\y), \quad \pm t\geq 0. 
\]
Then $\L_t$ is locally diffeomorphic, and the derivatives are uniformly bounded in $t$. If $R_1$ is 
sufficiently large, we can easily show that $\L_t^{-1}$ is well-defined on $\bigset{\x}{p_0(\x)\in I}$, 
and the image is contained in $\bigset{\x}{p_0(\x)\in I+[-\d,\d]}$. Thus the solution to \eqref{HJ} is 
given by 
\[
\F_\pm(t,\x) =u\circ \L_t^{-1}, \quad 
u(t,\y)=R_1p_0(\y) +\int_0^t\bigpare{p(x(s),\x(s))-x(s)\cdot \nabla V(x(s))}ds.
\]
Moreover, by the construction, we have 
\[
\pa_\x \F_\pm(t,\x) = x(t,Rv(\y),\y), \quad \y =\L_t^{-1}(\x).
\]
Thus properties of $\pa_\x^\a\F_\pm$, $\a\neq 0$, follow from properties of 
$x(t,y,\y)$. In particular, we have 
\[
\bigabs{\pa_\x^\a\F_\pm(t,\x)}\leq C\jap{t}, \quad 
\pm t\geq 0, \ \x\in \bigset{\x}{p_0(\x)\in I}, 
\]
if $\a\neq 0$. Also, by the definition of $u$, we learn 
\begin{align*}
&\bigabs{\F_\pm(t,\x) -t p_0(\x)}\leq C\jap{t}^{1-\m}, \quad \pm t\geq 0, \\
&\bigabs{\pa_\x^\a(\pa_\x\F_\pm(t,\x)-tv(\x))}\leq C\jap{t}^{1-\m}, \quad \pm t>0,
\end{align*}
for any $\a\in \ze_+^d$. 

\section{Existence of modified wave operators}

Ler $\F_\pm(t,\x)$, etc., be as in Section~3. 
We fix $I\Subset (-d,d)\setminus \mathcal{T}$. 
We show 

\begin{thm} Suppose $V\in S^{-\m}(\ze^d)$ with $\m>0$. Then the modified wave operators 
\[
W^\F_\pm(I)=\slim_{t\to\pm\infty} e^{itH} e^{-i\F_\pm(t,D_x)} E_I(H_0)
\]
exists. 
\end{thm}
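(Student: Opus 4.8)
The plan is to use Cook's method. To show that $W^\F_\pm(I) = \slim_{t\to\pm\infty} e^{itH} e^{-i\F_\pm(t,D_x)} E_I(H_0)$ exists, it suffices by the standard density argument to show that for $u$ in a dense subset of $\Ran E_I(H_0)$ — concretely, $u = g(H_0) v$ with $g \in C_0^\infty((-d,d)\setminus\mathcal{T})$ supported in (a neighborhood of) $I$ and $v$ in some dense set — the vector-valued function $t\mapsto e^{itH} e^{-i\F_\pm(t,D_x)} u$ has an integrable derivative for $\pm t \geq 1$. Since the family of operators $e^{itH} e^{-i\F_\pm(t,D_x)} E_I(H_0)$ is uniformly bounded (indeed, a product of unitaries times a projection), convergence on a dense set yields convergence everywhere; the limit is then automatically a partial isometry, and one checks it is isometric on $\Ran E_I(H_0)$ by noting $\norm{W_\pm^\F(I) u} = \lim \norm{e^{-i\F_\pm(t,D_x)} u} = \norm{u}$ for $u\in\Ran E_I(H_0)$. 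The intertwining property follows from the Hamilton–Jacobi equation: formally $\frac{d}{dt}\F_\pm(t,D_x) = p(\pa_\x\F_\pm(t,D_x),D_x)$, and one shows $e^{-i\F_\pm(t,D_x)} H_0 - (\text{something} + p_0(D_x)) e^{-i\F_\pm(t,D_x)} \to 0$, or more cleanly by the usual trick $e^{isH} W_\pm^\F(I) = \slim e^{i(t+s)H} e^{-i\F_\pm(t,D_x)} E_I(H_0)$ and comparing with $W_\pm^\F(I) e^{isH_0}$ using $e^{-i\F_\pm(t,D_x)} e^{isH_0} - e^{-i\F_\pm(t-s,D_x)} e^{-i\F_\pm(t,D_x)}\cdots$; but the standard route is to differentiate the relation $W_\pm^\F(I) = \slim e^{itH}e^{-i\F_\pm(t,D_x)}E_I(H_0)$ against the HJ equation.

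The heart of the matter is the derivative estimate. Compute
\begin{align}
\frac{d}{dt}\Bigpare{e^{itH} e^{-i\F_\pm(t,D_x)} u}
&= i e^{itH}\Bigpare{H - (\pa_t\F_\pm)(t,D_x)} e^{-i\F_\pm(t,D_x)} u \nonumber \\
&= i e^{itH}\Bigpare{H_0 + V - p(\pa_\x\F_\pm(t,D_x),D_x)} e^{-i\F_\pm(t,D_x)} u, \nonumber
\end{align}
using \eqref{HJ}. Now $H_0 = p_0(D_x)$ and $p(x,\x) = p_0(\x) + V(x)$, so the bracket equals $V - V(\pa_\x\F_\pm(t,D_x))$ (as operators; here $V$ on the left is multiplication by $V[n]$ on $\ell^2(\ze^d)$). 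The key point, exactly as in H\"ormander's argument, is that $e^{-i\F_\pm(t,D_x)} u$ is, for $u = g(H_0)v$ with $g$ supported near $I$, essentially microlocalized near the classical region: by non-stationary phase / the estimates of Section~4, $e^{-i\F_\pm(t,D_x)} u[n]$ is rapidly decaying outside $n \approx \pa_\x\F_\pm(t,\x) \approx t v(\x)$, i.e. for $|n| \gtrsim |t|$ with $|t|$ large. Hence multiplication by $V$ acting there behaves like $O(\jap{t}^{-\m})$, and likewise $V(\pa_\x\F_\pm(t,D_x))$ has symbol $O(\jap{t}^{-\m})$ since $|\pa_\x\F_\pm(t,\x)| \gtrsim |t|$ on the relevant set (Proposition~3.1, $|x(t)|\geq \tfrac{1}{\sqrt2}(R+\sqrt\d|t|)$) and $V\in S^{-\m}$. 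One wants an estimate of the form $\norm{(V - V(\pa_\x\F_\pm(t,D_x))) e^{-i\F_\pm(t,D_x)} g(H_0)}_{\ell^2\to\ell^2} \leq C\jap{t}^{-1-\e}$ for some $\e>0$ — but $\m$ is only positive, possibly $\leq 1$, so a single power $\jap{t}^{-\m}$ is not integrable. This is the standard long-range difficulty, and the resolution is to exploit an \emph{extra} gain: commuting $V(\pa_\x\F_\pm(t,D_x))$ past $e^{-i\F_\pm(t,D_x)}$ against $V$ (multiplication) produces a difference controlled by $\sup|\nabla V|$ times the spatial spread of the wave packet, which is $O(\jap{t}^{1-\m})\cdot O(\jap{t}^{-1-\m}) = O(\jap{t}^{-2\m})$-type terms, plus a leftover that after one integration by parts in $t$ (or one application of Duhamel) gains. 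In practice, following H\"ormander, one shows the difference is a sum of a term that is $O(\jap{t}^{-1-\d_0})$ for some $\d_0>0$ (hence integrable) and an exact $t$-derivative of something $O(\jap{t}^{-\d_0})$ that vanishes in the limit — the telescoping/Abel-summation structure of H\"ormander's method.

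Concretely I would proceed as follows. Step 1: fix $g\in C_0^\infty$, $\supp g\subset (-d,d)\setminus\mathcal{T}$, with $g\equiv 1$ near $I$, and reduce to showing $\int_1^\infty \norm{\frac{d}{dt}(e^{itH} e^{-i\F_\pm(t,D_x)} g(H_0)v)}\,dt <\infty$ for $v\in\ell^2$; since the integrand is a product of a unitary with $(V - V(\pa_\x\F_\pm(t,D_x)))e^{-i\F_\pm(t,D_x)}g(H_0)$ applied to $v$, it suffices to bound this operator norm. Step 2 (microlocalization): show that for $n$ with $|n| \geq c|t|$ false, i.e. on $|n|\le \e t$ with $\e$ small, $e^{-i\F_\pm(t,D_x)}g(H_0)$ has an $O(\jap{t}^{-N})$ kernel by stationary phase — the phase $n\cdot\x - \F_\pm(t,\x)$ has gradient $n - \pa_\x\F_\pm(t,\x)$, and $|\pa_\x\F_\pm(t,\x)|\geq c_1|t|$ on $\supp g\circ p_0$ for $|t|$ large by the lower bound following Proposition~3.1 together with $\pa_\x\F_\pm(t,\x) = x(t, Rv(\y),\y)$ from Section~4; the higher $\x$-derivatives of $\F_\pm$ are $O(\jap{t})$ (Section~4), so repeated integration by parts is legitimate and gives decay faster than any power of $\jap{t}$ in this region. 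Step 3 (main region): on $|n|\geq \e t$, bound $\norm{V\cdot(\cdot)}$ by $\sup_{|n|\geq\e t}|V[n]| = O(\jap{t}^{-\m})$ times $\norm{e^{-i\F_\pm(t,D_x)}g(H_0)}=1$; similarly estimate $\norm{V(\pa_\x\F_\pm(t,D_x))e^{-i\F_\pm(t,D_x)}g(H_0)}$ via the symbol bound $|V(\pa_\x\F_\pm(t,\x))|=O(\jap{t}^{-\m})$. Step 4: this only gives $O(\jap{t}^{-\m})$, not integrable; so — and this is where the real work lies — refine by writing $V - V(\pa_\x\F_\pm(t,D_x))$ as a commutator-type expression and extract a full time derivative plus a genuinely integrable remainder, exactly mirroring H\"ormander's Theorem in the $\re^d$ case; the \textbf{main obstacle} is carrying this pseudodifferential-calculus-on-the-torus argument through cleanly, in particular verifying that the symbol classes are preserved under composition with $e^{-i\F_\pm(t,D_x)}$ with the stated $t$-dependent bounds from Section~4, and handling the fact that $\pa_\x\F_\pm$ grows like $\jap{t}$ so the relevant symbols live on a $t$-dependently rescaled region. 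Step 5: once integrability is established, Cook's method gives existence; uniform boundedness gives extension to all of $\Ran E_I(H_0)$ and to $\ell^2$ after composing with $E_I(H_0)$; isometry onto a subspace of $\mathcal{H}_{ac}(H)$ follows from $\norm{e^{-i\F_\pm(t,D_x)}E_I(H_0)v}=\norm{E_I(H_0)v}$ and the fact that $\Ran W_\pm^\F(I)$ is contained in the absolutely continuous subspace (by the intertwining relation and spectral theory — $W_\pm^\F(I)$ conjugates $H_0|_{\Ran E_I(H_0)}$, which is purely a.c., into $H$). Step 6: intertwining — differentiate $e^{isH}W_\pm^\F(I) = \slim_{t\to\pm\infty} e^{i(t+s)H}e^{-i\F_\pm(t,D_x)}E_I(H_0)$ in $s$ at $s=0$, or reparametrize $t\mapsto t-s$ and use that $e^{-i\F_\pm(t-s,D_x)} - e^{-is p_0(D_x)}e^{-i\F_\pm(t,D_x)}\to 0$ strongly on $\Ran E_I(H_0)$ (again from $\abs{\F_\pm(t,\x)-tp_0(\x)}=O(\jap{t}^{1-\m})$ giving $\F_\pm(t,\x)-\F_\pm(t-s,\x)\to s p_0(\x)$ uniformly on compacts), yielding $e^{isH}W_\pm^\F(I) = W_\pm^\F(I)e^{isH_0}$, hence $HW_\pm^\F(I)=W_\pm^\F(I)H_0$.
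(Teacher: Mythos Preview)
Your setup through Step~3 is essentially correct and matches the paper: Cook's method, the derivative computation yielding $\bigl(V - V(\pa_\x\F_\pm(t,D_x))\bigr)e^{-i\F_\pm(t,D_x)}\f$, and the non-stationary phase estimate away from the propagation region are all right. You also correctly identify that crude sup-norm bounds on that region only yield $O(\jap{t}^{-\m})$, which is not integrable.

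The gap is in Step~4. You propose to recover integrability by a commutator / Abel-summation / time-derivative-extraction trick, flag this as the ``main obstacle'', and do not resolve it. But this is \emph{not} how the paper (nor H\"ormander's original argument, which you invoke) actually proceeds, and you are missing the real mechanism. The paper applies the \emph{full} stationary phase expansion to $\f(t,x)=e^{-i\F_\pm(t,D_x)}\f$ on the propagation set $G_t(D')=\{\pa_\x\F_\pm(t,\x):\x\in D'\}$: writing $\y=G_t^{-1}(x)$ for the stationary point (so that $x=\pa_\x\F_\pm(t,\y)$), one has
\[
\f(t,x)=t^{-d/2}J(t,\y)\,\hat\f(\y)+O\bigl(\jap{t}^{-d/2-1}\bigr),
\]
with $J$ independent of $\f$, determined by $\F_\pm$ and its first two $\x$-derivatives. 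The crucial observation is that $V(\pa_\x\F_\pm(t,D_x))$ acts on the Fourier side as multiplication by $V(\pa_\x\F_\pm(t,\x))$, so applying stationary phase to $\bigl[V(\pa_\x\F_\pm(t,D_x))e^{-i\F_\pm(t,D_x)}\f\bigr](x)$ produces leading term $t^{-d/2}J(t,\y)\,V(\pa_\x\F_\pm(t,\y))\,\hat\f(\y)$. Since $x=\pa_\x\F_\pm(t,\y)$, this is \emph{identically equal} to the leading term of $V(x)\f(t,x)$. The leading terms therefore cancel exactly, and only the stationary-phase remainders survive, each of size $O(\jap{t}^{-d/2-1-\m})$ (the extra $\jap{t}^{-\m}$ coming from the factor $V$ on this region). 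Taking the $\ell^2$ norm over $G_t(D')$, whose cardinality is $O(\jap{t}^d)$, gives $O(\jap{t}^{-1-\m})$, which is integrable for every $\m>0$.

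So the missing factor of $\jap{t}^{-1}$ comes directly from the next order of the stationary phase expansion, with no pseudodifferential commutator calculus or telescoping needed. One related technical point you omitted: for the Hessian $t^{-1}\pa_\x^2\F_\pm(t,\x)\approx\mathrm{diag}(\cos\x_j)$ to be nondegenerate one must also exclude the hypersurfaces $\cos\x_j=0$; the paper accordingly takes the dense set to be $\hat\f\in C_0^\infty(D(I))$ with $D(I)=\{\x:p_0(\x)\in I,\ \cos\x_j\neq 0\ \forall j\}$, which you should do as well.
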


Generally, we follow the argument of H\"ormander \cite{Ho} to prove Theorem~5.1. 
We denote
\[
D(I) = \bigset{\x\in\torus^d}{p_0(\x)\in I, \cos(\x_j)\neq 0, 
j=1,\dots, d}.
\]
Then $C_0^\infty(D(I))$ is dense in 
$\mathrm{Ran}\, E_I(H_0)$. Thus it suffices to show the existence 
of $W_\pm(I)\f$ for $\f\in C_0^\infty(D(I))$. Moreover, by the partition of 
unity, we may suppose $\f$ is supported in an arbitrarily small neighborhood 
of a point $\x_0\in D(I)$. We compute 
\begin{align*}
\f(t,x)& := e^{-i\F_\pm(t,D_x)}\f(x)\\
&= (2\pi)^{-d/2} \int_{\torus^d} e^{i(x\cdot\x-\F_\pm(t,\x))}
\hat \f(\x)d\x
\end{align*}
for $x\in\ze^d$, where $\hat \f = F\f$. We write 
\[
x\cdot\x -\F_\pm(t,\x) =t \biggpare{\frac{x}{t}\cdot\x-\frac{1}{t}\F_\pm(t,\x)}
\]
and consider $t$ as the large parameter when we apply the stationary phase method. 
The stationary phase point is then given by 
\[
\frac{x}{t}=\frac{1}{t}\pa_\x\F_\pm(t,\x) = v(\x) +O(\jap{t}^{-\m}), 
\quad \pm t>0.
\]
From this, we also learn that the determinant of the Hessian of $\frac1t\F_\pm(t,\x)$ is 
$\prod \cos(\x_j) +O(\jap{t}^{-\m})$, and hence its absolute value is 
uniformly bounded from below by a positive constant on the support of $\hat\f$ 
with large $t$. Also the derivatives of $\frac1t \F_\pm(t,\x)$ in $\x$ are uniformly 
bounded in $t$ on the support of $\hat \f$. 

Let $D'\subset D(I)$ be a small neighborhood of $\supp\hat\f$. We denote
\[
G_t\ :\ \x\mapsto \pa_\x \F_\pm(t,\x).
\]
We may suppose $G_t$ is diffeomorphis on $D'$, and we note 
$\mathrm{vol}(G_t(D'))=O(\jap{t}^d)$ as $t\to\pm\infty$. 

By the non stationary phase method, we first learn 
\begin{equation}\label{NSP}
|\f(t,x)|\leq C_N \jap{|x|+|t|}^{-N} \quad \text{if }x\notin G_t(D'),
\end{equation}
with any $N$. On the other hand, by the stationary phase method, we have 
\begin{equation}\label{SP}
\f(t,x)= t^{-d/2} J(t,\y) \hat \f(\y) +O(\jap{t}^{-d/2-1})
\quad \text{for } x\in G_t(D'), 
\end{equation}
where $\y= G_t^{-1}(x)$, $J(t,\y)$ is a uniformly bounded function of $t,\y$, 
depending only on $\F_\pm(t,\y)$, $\pa_\x\F_\pm(t,\y)$ and 
$\pa_\x\pa_\x\F_\pm(t,\y)$. In particular, $J(t,\y)$ is independent of $\f$. 
(see, e.g., \cite{Ho2}, \S7.7). 

Now we estimate 
\[
\frac{d}{dt} \Bigpare{ e^{itH} e^{-i\F_\pm(t,D_x)}\f} 
= i e^{itH} (H-\pa_t\F_\pm(t,D_x)) e^{-i\F_\pm(t,D_x)}\f
\]
and apply the Cook-Kuroda method. We note, by \eqref{HJ}, it suffices to show 
\begin{align}
&\int_0^{\pm\infty} \Bignorm{\bigpare{H-p(\pa_\x\F_\pm(t,D_x),D_x)}
e^{-i\F_\pm(t,D_x)}\f} dt \nonumber\\
&\qquad = \int_0^{\pm\infty} \Bignorm{\bigpare{V(x)-V(\pa_\x\F_\pm(t,D_x))}
e^{-i\F_\pm(t,D_x)}\f} dt <\infty. \label{CK}
\end{align}
By \eqref{NSP}, it is easy to see 
\[
\int_0^{\pm\infty} \Bignorm{\chi_{G_t(D'))^c}(x) 
\bigpare{V(x)-V(\pa_\x\F_\pm(t,D_x))}\f(t,x)} dt <\infty.
\]
On the other hand, by \eqref{SP}, on $G_t(D')$ we have 
\begin{align*}
& V(x) \f(t,x) = t^{-d/2} V(x) J(t,\y) \hat\f(\y) + O(\jap{t}^{-d/2-1-\m}), \\
&V(\pa_\x\F_\pm(t,D_x))\f(t,\x) = t^{-d/2} J(t,\x) V(\pa_\x\F_\pm(t,\y))\hat \f(y)
+O(\jap{t}^{-d/2-1-\m}),
\end{align*}
where $x=\pa_\x\F_\pm(t,\y)$, and the leading terms of these coincide.
Thus we learn 
\begin{align*}
&\int_0^{\pm\infty} \Bignorm{\chi_{G_t(D'))}(x) 
\bigpare{V(x)-V(\pa_\x\F_\pm(t,D_x))}\f(t,x)} dt \\
&\quad \leq \int_0^{\pm\infty} C\jap{t}^{-d/2-1-\m} 
\bignorm{\chi_{G_t(D')}(\cdot)}dt 
\leq C'\int_0^{\pm\infty} \jap{t}^{-1-\m}dt<\infty.
\end{align*}
Combining these, we conclude \eqref{CK}, and completes the proof. \qed 

Now Theorem~1.1 follows from Theorem~5.1 and standard argument of the scattering theory. 

\begin{rem}
We note that the choice of the extension $V(x)$ on $\re^d$ is not unique, and hence $\F_\pm(t,\x)$ are not 
unique either. The modified wave operators $W_\pm^\F$ depend on such construction, but in general, 
they are equivalent up to multiplication by unitary Fourier multipliers from the right. Actually, if 
\[
W_\pm^\F=\slim_{t\to\pm\infty} e^{itH} e^{-i\F_\pm(t,D_x)} \quad \text{and}\quad
W_\pm^\G=\slim_{t\to\pm\infty} e^{itH} e^{-i\G_\pm(t,D_x)}
\]
exist, then the limit 
\[
G_\pm = \slim_{t\to\pm\infty} e^{i\G_\pm(t,D_x)} e^{-i\F_\pm(t,D_x)}
\]
exist. $G_\pm$ are obviously unitary Fourier multipliers, and $W_\pm^\F=W_\pm^\G G_\pm$. 
\end{rem}

\begin{rem}
If $V\in S^{-\m}(\ze^d)$ with $\m>1/2$, then we may employ a simpler approximate solution 
to the Hamilton-Jacobi equation. 
\[
U_D(t)=e^{-i\F^D(t,D_x)}, \quad \F^D(t,\x)= tp_0(\x) + \int_0^t V(sv(\x))ds, 
\]
is called the Dollard-type modifier, and we can show the existence of the modified wave operators 
similarly (see, e.g., \cite{RS} \S XI.9, \cite{DG} \S 4.9).
\end{rem}

\end{document}